\newtheorem{lemma}{Lemma}
\newtheorem{theorem}{Theorem}
\newtheorem{proposition}{Proposition}
\newproof{proof}{Proof}
\begin{document}
\title{On state complexity of unions of binary factor-free languages}
\author[szabivan]{Szabolcs Iv\'an}
\address[szabivan]{University of Szeged, Hungary}
\begin{abstract}
In~\cite{brzozowskigalina}, it has been conjectured that if $K$ and $L$ are factor-free regular languages over a binary alphabet having state complexity $m$ and $n$, resp,
then the state complexity of $K\cup L$ is at most $mn-(m+n)+3-\min\{m,n\}$.

We disprove this conjecture by giving a lower bound of $mn-(m+n)-2-\lfloor\frac{\min\{m,n\}-2}{2}\rfloor$, which exceeds the
conjectured bound whenever $\min\{m,n\}\geq 10$.
\end{abstract}
\maketitle
\section{Introduction}
The state complexity of a regular language is the number of states of its minimal automaton.
The state complexity of a binary operation $\circ$ is a function on $n,m$: if $K$ has state complexity $n$
and $L$ has state complexity $m$, then how large can the state complexity of $K\circ L$ be?
In this note we give a lower bound for the state complexity of \emph{union} when the alphabet is \emph{binary}
and the languages $K,L$ are \emph{factor-free}.

For a survey on state complexity see~\cite{Yu00statecomplexity},
for applications of (binary) factor-free languages in cryptography and coding theory see~\cite{shyr1991free},
for results on the state complexity of operations on subclasses of regular languages see~\cite{Han06statecomplexity,Han20092537,hankaisheng,brzozowskigalina}.

\section{Notation}
We assume the reader has basic knowledge on language and automata theory.
An alphabet is a finite nonempty set $\Sigma$, a $\Sigma$-word is a finite sequence $w=a_1\ldots a_n$ of letters $a_i\in\Sigma$,
with $\varepsilon$ denoting the empty word when $n=0$, a language (over $\Sigma$) is any set of $\Sigma$-words.
The set $\Sigma^*$ of all words forms a monoid with the operation being (con)catenation, or simply
product of words given by $a_1\ldots a_n\cdot b_1\ldots b_k=a_1\ldots a_nb_1\ldots b_k$. 

A finite automaton is a system $M=(Q,\Sigma,\delta,q_0,F)$ with $Q$ being the finite set of states, $q_0\in Q$ the start state,
$F\subseteq Q$ the set of final states, $\Sigma$ is the finite nonempty input alphabet and $\delta:Q\times\Sigma\to Q$ the
transition function which is extended to $Q\times\Sigma^*\to Q$ as $\delta(q,\varepsilon)=q$ and $\delta(q,wa)=\delta(\delta(q,w),a)$.
If $M$ is understood, we simply write $qw$ for $\delta(q,w)$. The language recognized by $M$ is $L(M)=\{w\in\Sigma^*:q_0w\in F\}$.

States of the form $q_0w$ are called \emph{reachable} states of $M$.
A trap is a non-final state $p\notin F$ such that $pa=p$ for each $a\in \Sigma$ (thus, $pw=p$ for each word $w$ as well).
Two states $p,q$ are called \emph{distinguishable} if there exists a word $w$ such that exactly one of the states $pw$ and $qw$ belongs to $F$.
It is known that $M$ is minimal (that is, has the smallest possible number of states among the automata recognizing $L(M)$)
iff each pair $p\neq q$ of its states is distingushable and all its states are reachable.
Such automata are also called \emph{reduced}.
A state $q$ is empty if there is no word $w$ with $qw\in F$. In a minimal automaton, there is at most one empty state which is then a trap.
The state complexity of a regular (that is, recognizable by some finite automaton) language is the number of states
of its minimal automaton.

A word $u$ is a factor of a word $v$ if $v=xuy$ for some words $x$ and $y$. A language $L$ is factor-free if $u,xuy\in L$
imply $x=y=\varepsilon$ for any $u,x,y\in\Sigma^*$.

\section{Factor-free languages}
In~\cite{brzozowskigalina} the following lower bound was given:
\begin{proposition}
For the binary alphabet $\Sigma=\{a,b\}$ there exist regular factor-free languages $K_n$ and $L_m$ for each $n,m>6$
such that $K_n$ has state complexity $n$, $L_m$ has state complexity $m$ and $K_n\cup L_m$ has state complexity
$nm-(n+m)+3-\min\{n,m\}$.
\end{proposition}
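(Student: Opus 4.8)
The plan is to make everything explicit: for each $n,m>6$ exhibit deterministic automata $\mathcal{A}_n$ and $\mathcal{B}_m$ over $\Sigma=\{a,b\}$, verify that $K_n=L(\mathcal{A}_n)$ and $L_m=L(\mathcal{B}_m)$ are factor-free with state complexities $n$ and $m$, form the product automaton recognising $K_n\cup L_m$, and count its reachable and pairwise distinguishable states. Since a factor-free language is both prefix-free and suffix-free, $\mathcal{A}_n$ may be taken to have a single final state $f_K$ with $\delta(f_K,a)=\delta(f_K,b)=d_K$, where $d_K$ is the dead trap, and an initial state $s_K$ with no incoming transitions; the remaining $n-3$ ``body'' states carry all the combinatorial content, and symmetrically for $\mathcal{B}_m$. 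The first (and already nontrivial) design constraint is to route the transitions on the body states so that the accepted language really is factor-free: beyond prefix- and suffix-freeness this forbids an accepted word from lying strictly inside another, which in automaton terms rules out certain short ``return paths'' from body states back toward $f_K$, and checking this for the chosen witnesses is a direct combinatorial argument about the shapes of accepted words.

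I would organise the count around residual languages: in the product automaton the state reached on input $w$ has residual $w^{-1}K_n\cup w^{-1}L_m$, while $\delta_{\mathcal{A}_n}(s_K,w)$ has residual $w^{-1}K_n$ and $\delta_{\mathcal{B}_m}(s_L,w)$ has residual $w^{-1}L_m$. Three reductions bring $nm$ down to the claimed figure. First, suffix-freeness: for every nonempty $w$ one has $\delta_{\mathcal{A}_n}(s_K,w)\ne s_K$ and $\delta_{\mathcal{B}_m}(s_L,w)\ne s_L$, since an equality would create an incoming edge at an initial state, which is impossible in the minimal DFA of a nonempty suffix-free language; hence $(s_K,s_L)$ is the only reachable product state with a coordinate equal to an initial state, leaving at most $1+(n-1)(m-1)=nm-(n+m)+2$ reachable states. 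Second, the witnesses are taken with $K_n\cap L_m=\emptyset$, so $(f_K,f_L)$ is unreachable, removing one more. Third --- and this is the source of the term $-\min\{n,m\}$ --- assume $m\le n$; then the witnesses are designed so that every residual $q^{-1}L_m$ of a non-initial, non-dead state $q$ of $\mathcal{B}_m$ (there are $m-2$ such, one of which, $\{\varepsilon\}$, is automatic) also occurs as a residual $p^{-1}K_n$ at some non-initial state $p$ of $\mathcal{A}_n$. For such $q$ and matching $p$, the state with first coordinate in the trap, $(d_K,q)$, and the state with second coordinate in the trap, $(p,d_L)$, have residuals $q^{-1}L_m$ and $p^{-1}K_n$ respectively (the trap contributing $\emptyset$), so they are equivalent, identifying $m-2$ further pairs. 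Altogether $nm-(n+m)+2-1-(m-2)=nm-(n+m)+3-\min\{n,m\}$ classes remain, as required.

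It remains to certify both that no further states collapse and that all the surviving ones are actually reachable. For reachability I would produce a concrete word for each surviving pair, using that one of $a,b$ acts on each component as a near-total rotation of its body states while the other advances toward $f$ or toward $d$; suitable concatenations then hit every admissible pair, in particular reaching $(d_K,q)$ and its partner $(p,d_L)$. For distinguishability the key is that the map $(p,q)\mapsto p^{-1}K_n\cup q^{-1}L_m$ must be injective on the surviving pairs apart from the $m-2$ deliberate coincidences above; this is verified using the fact that every non-dead state of $\mathcal{A}_n$, respectively of $\mathcal{B}_m$, can be driven to its unique final state, which supplies explicit distinguishing words. The main obstacle is exactly the construction: producing $\mathcal{A}_n$ and $\mathcal{B}_m$ that are simultaneously factor-free, minimal of sizes $n$ and $m$, disjoint, and share precisely the intended $m-2$ residuals while creating no accidental further coincidences; once such witnesses are fixed, the reachability and distinguishability verifications, though long, are routine case checks.
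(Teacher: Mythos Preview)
The first thing to notice is that this proposition is \emph{not} proved in the present paper at all: it is quoted verbatim from \cite{brzozowskigalina} as the previously known lower bound, and the paper then proceeds to \emph{improve} on it with a different construction (Theorem~1). So there is no ``paper's own proof'' to compare your proposal against; the witnesses used in \cite{brzozowskigalina} are not reproduced here.

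As to your outline on its own merits: the global counting scheme (strip the first row and column by suffix-freeness, drop $(f_K,f_L)$ by disjointness, then merge $m-2$ pairs coming from shared residuals) is arithmetically correct and is a reasonable way to aim at the target $nm-(n+m)+3-\min\{n,m\}$. But the proposal is only a plan: you explicitly defer the construction of $\mathcal{A}_n$ and $\mathcal{B}_m$, and that is precisely where the content lies. Without concrete witnesses nothing can be checked, and the constraints you impose are in tension. In particular, your mechanism for producing exactly $m-2$ merges is that for each non-initial, non-dead state $q$ of $\mathcal{B}_m$ there is a state $p$ of $\mathcal{A}_n$ with $p^{-1}K_n=q^{-1}L_m$, so that $(d_K,q)\sim(p,d_L)$. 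But then the \emph{diagonal} pair $(p,q)$, if reachable, has residual $p^{-1}K_n\cup q^{-1}L_m=p^{-1}K_n$ as well, and hence is equivalent to both of those states, yielding an extra, unintended merge. You would therefore have to arrange that every such $(p,q)$ is unreachable in the product while $(d_K,q)$ and $(p,d_L)$ are reachable --- a nontrivial additional requirement that your sketch does not address. Until explicit automata are written down and these reachability/distinguishability claims are verified state by state (as the paper does for its own, different witnesses in the proof of Theorem~1), the argument remains a heuristic rather than a proof.
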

They conjectured this bound to be tight. In the rest of this paper we give a better lower bound, at least for the case when
$m,n$ are large enough. We note that due to~\cite{brzozowskigalina}, for \emph{at least ternary} alphabets, $mn-(n+m)$ is a tight bound.

In this section we assume that $m$ and $n$ are large enough, say $\min\{m,n\}\geq 10$.

We define two reduced $n$-state automata $A_n$ and $B_n$ as follows.
In both cases, the set of states is $[n]=\{1,\ldots,n\}$, the start state is $1$, the unique accepting state is $n-1$,
state $n$ is a trap state and $p\sigma=n$ for $p\in\{n-1,n\}$ and $\sigma\in\{a,b\}$. Moreover, $1a=2$ and $1b=n$ in both cases.

For $A_n$, let $pa=p$ for each \emph{odd} $3\leq p<n-1$, $pa=n$ for each \emph{even} $2\leq p<n-1$,
let $pb=p+1$ for each $2\leq p<n-2$ and $pb=2$ for $p=n-2$.

For $B_n$, let $pa=p+1$ for each $2\leq p\leq n-1$ and $pb=p$ for each $2\leq p<n-1$.

See Figure~\ref{fig-cross} for $A_{10}$ (depicted vertically on the left) and $B_{7}$ (depicted horizontally on top).
Missing edges all go to the trap state $n$.

It is clear that all states of $A_n$ and $B_n$ are reachable.
$A_n$ is also reduced: $ab^{n-4}a$ is accepted exactly from $1$, and for each $2\leq p\leq n-2$, $b^{n-2-p}a$ is a word
accepted exactly from $p$ and not from any other state. The empty word is accepted only from $n-1$, and $n$ is a trap state,
thus each pair of states is distinguishable.

Also, $B_n$ is reduced: for each state $1\leq p\leq n-1$, $a^{n-1-p}$ is a word which is accepted exactly from $p$ and not from any other
state, and $p=n$ is a trap state, thus again, each pair of states is distinguishable.

Let $K_n$ stand for $L(A_n)$ and $L_n$ stand for $L(B_n)$. We claim that $K_n$ and $L_n$ are factor-free.
To see this, we start with a handy lemma (which is probably folklore but we were not able to find it in the literature\footnote{We would be grateful to the reviewers to suggest a source for this lemma.}).
\begin{lemma}
\label{lem-factor-free}
A reduced automaton $A=(Q,\Sigma,\delta,q_0,F)$ with $|Q|>1$ recognizes a factor-free language if and only if it satisfies all the following conditions:
\begin{itemize}
\item[i)] $q_0$ has indegree $0$;
\item[ii)] $F=\{q_f\}$ is a singleton;
\item[iii)] there is a trap state $\bot\neq q_f$ with $p\sigma=\bot$ for every $p\in\{q_f,\bot\}$, $\sigma\in\Sigma$;
\item[iv)] for any word $u\in L(A)$ and state $p\in Q-\{q_0\}$ it holds that $pu=\bot$.
\end{itemize}
\end{lemma}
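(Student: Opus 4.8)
The plan is to prove the two implications separately, using only the definition of factor-freeness together with the two quoted facts that in a reduced automaton every state is reachable and there is at most one empty state, which is then a trap. For the ``if'' direction I assume (i)--(iv), take $u,xuy\in L(A)$, and trace the run of $A$ on $xuy$. Put $p=q_0x$. If $x\neq\varepsilon$ then $p\neq q_0$ (an equality would give $q_0$ an incoming edge, against~(i)), so~(iv) applied to $u\in L(A)$ yields $pu=\bot$, and then by~(iii) the state $q_0(xuy)=(pu)y=\bot$ is non-final, since it is distinct from $q_f$, the only final state by~(ii); this contradicts $xuy\in L(A)$. Hence $x=\varepsilon$, so from $u,uy\in L(A)$ we get $q_0u=q_0(uy)=q_f$, whence $q_fy=q_f$; were $y$ nonempty, its first letter $\sigma$ would send $q_f$ to $\bot$ by~(iii), giving $q_fy=\bot\neq q_f$. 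So $y=\varepsilon$ as well, and $L(A)$ is factor-free. This half is pure bookkeeping on runs.

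For the ``only if'' direction I assume $L(A)$ is factor-free; note first that $L(A)\neq\emptyset$, since the reduced automaton of $\emptyset$ has a single state while $|Q|>1$. The move used repeatedly is this: if $u$ lies in $L(A)$ together with a strictly longer word of one of the forms $zu$, $uw$, or $vuw$ (where the extra part $z$, $w$, or $v$ is nonempty), then $u$ is a proper factor of the longer word, contradicting factor-freeness. For~(i): if $q_0$ had an incoming edge, reachability gives a nonempty $z$ with $q_0z=q_0$, hence $zu\in L(A)$ for any $u\in L(A)$ --- impossible. For~(ii): if $q_0u\in F$, then no nonempty word is accepted from $q_0u$, for such a $w$ would give $uw\in L(A)$; thus exactly $\varepsilon$ is accepted from $q_0u$, and since every final state is reachable the same holds for all of them, so no two final states are distinguishable and, $A$ being reduced, there is a single final state $q_f$.

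For~(iii): since exactly $\varepsilon$ is accepted from $q_f$, nothing is accepted from $q_f\sigma$, i.e.\ $q_f\sigma$ is empty for each $\sigma$; as $\Sigma\neq\emptyset$ and a reduced automaton has a unique empty state, which is a trap, all the $q_f\sigma$ equal this trap $\bot$, we have $\bot\sigma=\bot$, and $\bot\neq q_f$ because $q_f$ is final while $\bot$ is not. For~(iv): given $u\in L(A)$ and $p\neq q_0$, reachability gives $p=q_0v$ with $v$ nonempty (otherwise $p=q_0$); if $pu\neq\bot$ then $pu$ is not empty, so $q_0(vuw)=(pu)w\in F$ for some $w$, i.e.\ $vuw\in L(A)$, which together with $u\in L(A)$ makes $u$ a proper factor of $vuw$ --- a contradiction. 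Hence $pu=\bot$, and all four conditions hold.

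The only step that needs genuine care is~(ii): one must pass from ``this particular reachable final state accepts only $\varepsilon$'' to ``$F$ is a singleton'' by invoking reducedness (distinct states of a reduced automaton are distinguishable, hence accept different languages), and one should check that the whole argument still works in the degenerate case $L(A)=\{\varepsilon\}$, where $q_0$ is itself the unique final state and $|Q|=2$. I expect no other difficulty: every remaining step is a single application of the factor-free condition to a suitably chosen pair of words, together with the two structural facts about reduced automata.
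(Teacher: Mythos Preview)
Your proof is correct and follows essentially the same approach as the paper's: both directions trace runs and exploit reachability plus the uniqueness of the empty state in a reduced automaton, with the same case split for (i)--(iv). Your treatment of~(ii) is marginally more direct (you show every final state has right language $\{\varepsilon\}$ immediately, whereas the paper first argues no two final states are connected by a nonempty path), and you explicitly flag the nonemptiness of $L(A)$ and the $L(A)=\{\varepsilon\}$ edge case, but these are cosmetic refinements of the same argument.
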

\begin{proof}
Assume the conditions all hold for $A$ and let $u,v,w\in\Sigma^*$ be words with $v\in L$ and $uvw\in L$.
We show that $u=w=\varepsilon$ in this case, proving the factor-freeness of $L$.
Suppose $u\neq\varepsilon$. Then by i), $q_0u\neq q_0$. By iv), $q_0uv=\bot$ then and thus by iii), $q_0uvw=\bot\notin F$ by ii) and that $q_f\neq\bot$.
Hence $u=\varepsilon$. Supposing $w\neq\varepsilon$ we get that $q_0uvw=q_0vw=q_fw$ which is $\bot$ by iii). Thus $w=\varepsilon$.

For the other direction, let $L$ be a factor-free regular language, and let $A$ be its minimal automaton.
Suppose $A$ does not satisfy i). Then since $A$ is reduced, $q_0u=q_0$ for some nonempty word $u$.
Since $|Q|>1$ and $A$ is reduced, $F$ is nonempty, hence $q_0v\in F$ for some $v$. Thus, $uv\in L$ as well as its proper factor $v$, a contradiction.
Thus $A$ satisfies i).

Now suppose there are at least two final states $p$ and $q$ with $pu=q$ for some (nonempty) word $u$. Then since $A$ is reduced,
$p=q_0v$ for some word $v$, hence both $vu$ and its proper factor $v$ are in $L$, a contradiction. Hence, from a final state
no other final state is reachable. This means that any state reachable from a final state by a nonempty word is an empty state.
Since $A$ is reduced, this yields that there is a trap state $\bot\notin F$ and $p\sigma=\bot$ for any final state $p\in F$ and letter $\sigma\in\Sigma$.
Thus all the final states are indistinguishable, hence there is exactly one final state, thus $A$ satisfies ii) and iii).

Also, if $pu=q\neq\bot$ for some $u\in L$ and $p\in Q-\{q_0\}$, then since $A$ is reduced, $p=q_0v$ for some nonempty word $v$ and
$q$ is nonempty, say $qw\in F$ for some $w$. Then both $vuw$ and its proper factor $u$ is in $L$, a contradiction. Hence $A$ satisfies iv).
\end{proof}

Now it is easy to show that $K_n$ and $L_n$ are factor-free.

For $K_n$, we have that $A_n$ clearly satisfies Conditions i), ii) (with $q_f=n-1$) and iii) (with $\bot=n$) of Lemma~\ref{lem-factor-free}.
For iv), any word $u\in L_n$ has the form $u=awa$ with $|w|_b\,\equiv\,-1\,\mathrm{mod}\,(n-3)$. Since $ka=\bot$ for all even $2\leq k<n-2$,
$ku=\bot$ for those states. Also, $kawa=\bot$ for $k\geq n-2$ since any word of length at least two maps those states to $\bot$.
Finally, if $3\leq k<n-2$ is odd, then $kaw$ is either $\bot$ or $k-1$ which is an even number between $2$ and $n-3$ (inclusive).
For such states $k-1$ we have $(k-1)a=\bot$, thus in this case $ku=\bot$ also holds, thus Condition iv) is also verified.

For $L_n$, observe that any word belonging to $L_n$ has the form $aua$ with $|u|_a=n-4$, hence in particular any word of $L_n$ contains
exactly $n-2$ number of $a$'s and each proper factor has less number of $a$'s, showing factor-freeness of $L_n$.

\begin{figure}[scale=0.4]
\label{fig-cross}
\tikzset{initial text={}}
\begin{tikzpicture}[node distance=1.7cm,->,auto,shorten >=1pt,scale=0.5, on grid]
\node[initial,state] (L1) {$1$};
\node[state] (L2) [right of=L1] {$2$};
\node[state] (L3) [right of=L2] {$3$};
\node[state] (L4) [right of=L3] {$4$};
\node[state] (L5) [right of=L4] {$5$};
\node[state,accepting] (L6) [right of=L5] {$6$};
\node[state] (L7) [right of=L6] {$7$};

\path (L1) edge node {$a$} (L2)
      (L2) edge node {$a$} (L3)
      (L3) edge node {$a$} (L4)
      (L4) edge node {$a$} (L5)
      (L5) edge node {$a$} (L6)
      (L6) edge node {$a,b$} (L7)
      (L2) edge [loop above] node {$b$} (L2)
      (L3) edge [loop above] node {$b$} (L3)
      (L4) edge [loop above] node {$b$} (L4)
      (L5) edge [loop above] node {$b$} (L5)
      (L7) edge [loop above] node {$a,b$} (L7)
      ;
      
\node[initial,state] (KL11) [below of=L1] {$1,1$};
\node(KL12)[below of=L2]{};
\node[state](KL22)[below of=KL12]{$2,2$};
\node[state](KL32)[below of=KL22]{$3,2$};
\node[state](KL42)[below of=KL32]{$4,2$};
\node[state](KL52)[below of=KL42]{$5,2$};
\node[state](KL62)[below of=KL52]{$6,2$};
\node[state](KL72)[below of=KL62]{$7,2$};
\node[state](KL82)[below of=KL72]{$8,2$};

\node[state](KL23)[right of=KL22]{$2,3$};
\node[state](KL33)[below of=KL23]{$3,3$};
\node[state](KL43)[below of=KL33]{$4,3$};
\node[state](KL53)[below of=KL43]{$5,3$};
\node[state](KL63)[below of=KL53]{$6,3$};
\node[state](KL73)[below of=KL63]{$7,3$};
\node[state](KL83)[below of=KL73]{$8,3$};
\node[accepting,state](KL93)[below of=KL83]{$9,3$};
\node[state](KL103)[below of=KL93]{$10,3$};

\node[state](KL24)[right of=KL23]{$2,4$};
\node[state](KL34)[below of=KL24]{$3,4$};
\node[state](KL44)[below of=KL34]{$4,4$};
\node[state](KL54)[below of=KL44]{$5,4$};
\node[state](KL64)[below of=KL54]{$6,4$};
\node[state](KL74)[below of=KL64]{$7,4$};
\node[state](KL84)[below of=KL74]{$8,4$};
\node[accepting,state](KL94)[below of=KL84]{$9,4$};
\node[state](KL104)[below of=KL94]{$10,4$};

\node[state](KL25)[right of=KL24]{$2,5$};
\node[state](KL35)[below of=KL25]{$3,5$};
\node[state](KL45)[below of=KL35]{$4,5$};
\node[state](KL55)[below of=KL45]{$5,5$};
\node[state](KL65)[below of=KL55]{$6,5$};
\node[state](KL75)[below of=KL65]{$7,5$};
\node[state](KL85)[below of=KL75]{$8,5$};
\node[accepting,state](KL95)[below of=KL85]{$9,5$};
\node[state](KL105)[below of=KL95]{$10,5$};

\node(KL26)[right of=KL25]{};
\node[accepting,state](KL36)[below of=KL26]{$3,6$};
\node(KL46)[below of=KL36]{};
\node[accepting,state](KL56)[below of=KL46]{$5,6$};
\node(KL66)[below of=KL56]{};
\node[accepting,state](KL76)[below of=KL66]{$7,6$};
\node(KL86)[below of=KL76]{};
\node[accepting,state](KL96)[below of=KL86]{$9,6$};
\node[accepting,state](KL106)[below of=KL96]{$10,6$};

\node[state](KL27)[right of=KL26]{$2,7$};
\node[state](KL37)[below of=KL27]{$3,7$};
\node[state](KL47)[below of=KL37]{$4,7$};
\node[state](KL57)[below of=KL47]{$5,7$};
\node[state](KL67)[below of=KL57]{$6,7$};
\node[state](KL77)[below of=KL67]{$7,7$};
\node[state](KL87)[below of=KL77]{$8,7$};
\node[accepting,state](KL97)[below of=KL87]{$9,7$};
\node[state](KL107)[below of=KL97]{$10,7$};

\node[initial,state] (K1) [left of=KL11]{$1$};
\node[state] (K2) [below of=K1]{$2$};
\node[state] (K3) [below of=K2]{$3$};
\node[state] (K4) [below of=K3]{$4$};
\node[state] (K5) [below of=K4]{$5$};
\node[state] (K6) [below of=K5]{$6$};
\node[state] (K7) [below of=K6]{$7$};
\node[state] (K8) [below of=K7]{$8$};
\node[state,accepting] (K9) [below of=K8]{$9$};
\node[state] (K10) [below of=K9]{$10$};

\path
  (K1) edge node {$a$} (K2)
  (K2) edge node {$b$} (K3)
  (K3) edge node {$b$} (K4)
  (K4) edge node {$b$} (K5)
  (K5) edge node {$b$} (K6)
  (K6) edge node {$b$} (K7)
  (K7) edge node {$b$} (K8)
  (K8) edge node {$a$} (K9)
  (K9) edge node {$a,b$} (K10)
  (K8) edge [bend left] node {$b$} (K2)
  (K3) edge [loop left] node {$a$} (K3)
  (K5) edge [loop left] node {$a$} (K5)
  (K7) edge [loop left] node {$a$} (K7)
  ;

\path
  (KL11) edge node {$a$} (KL22)
  (KL22) edge node {$b$} (KL32)
  (KL32) edge node {$b$} (KL42)
  (KL42) edge node {$b$} (KL52)
  (KL52) edge node {$b$} (KL62)
  (KL62) edge node {$b$} (KL72)
  (KL72) edge node {$b$} (KL82)
  (KL82) edge [bend left=20] node {$b$} (KL22)
  (KL23) edge node {$b$} (KL33)
  (KL33) edge node {$b$} (KL43)
  (KL43) edge node {$b$} (KL53)
  (KL53) edge node {$b$} (KL63)
  (KL63) edge node {$b$} (KL73)
  (KL73) edge node {$b$} (KL83)
  (KL83) edge [bend left=20] node {$b$} (KL23)
  (KL24) edge node {$b$} (KL34)
  (KL34) edge node {$b$} (KL44)
  (KL44) edge node {$b$} (KL54)
  (KL54) edge node {$b$} (KL64)
  (KL64) edge node {$b$} (KL74)
  (KL74) edge node {$b$} (KL84)
  (KL84) edge [bend left=20] node {$b$} (KL24)
  (KL25) edge node {$b$} (KL35)
  (KL35) edge node {$b$} (KL45)
  (KL45) edge node {$b$} (KL55)
  (KL55) edge node {$b$} (KL65)
  (KL65) edge node {$b$} (KL75)
  (KL75) edge node {$b$} (KL85)
  (KL85) edge [bend left=20] node {$b$} (KL25)
  (KL27) edge node {$b$} (KL37)
  (KL37) edge node {$b$} (KL47)
  (KL47) edge node {$b$} (KL57)
  (KL57) edge node {$b$} (KL67)
  (KL67) edge node {$b$} (KL77)
  (KL77) edge node {$b$} (KL87)
  (KL87) edge [bend left=20] node {$b$} (KL27)
  
  (KL32) edge node {$a$} (KL33)
  (KL33) edge node {$a$} (KL34)
  (KL34) edge node {$a$} (KL35)
  (KL35) edge node {$a$} (KL36)
  (KL36) edge node {$a$} (KL37)
  (KL37) edge [loop right] node {$a$} (KL37)
  (KL52) edge node {$a$} (KL53)
  (KL53) edge node {$a$} (KL54)
  (KL54) edge node {$a$} (KL55)
  (KL55) edge node {$a$} (KL56)
  (KL56) edge node {$a$} (KL57)
  (KL57) edge [loop right] node {$a$} (KL57)
  (KL72) edge node {$a$} (KL73)
  (KL73) edge node {$a$} (KL74)
  (KL74) edge node {$a$} (KL75)
  (KL75) edge node {$a$} (KL76)
  (KL76) edge node {$a$} (KL77)
  (KL77) edge [loop right] node {$a$} (KL77)
  
  (KL82) edge node {$a$} (KL93)
  (KL83) edge node {$a$} (KL94)
  (KL84) edge node {$a$} (KL95)
  (KL85) edge node {$a$} (KL96)
  (KL87) edge node {$a$} (KL97)
  (KL93) edge node {$b$} (KL103)
  (KL94) edge node {$b$} (KL104)
  (KL95) edge node {$b$} (KL105)
  (KL96) edge node {$a,b$} (KL107)
  (KL97) edge node {$a,b$} (KL107)
  (KL93) edge node {$a$} (KL104)
  (KL94) edge node {$a$} (KL105)
  (KL95) edge node {$a$} (KL106)
  (KL103) edge node {$a$} (KL104)
  (KL104) edge node {$a$} (KL105)
  (KL105) edge node {$a$} (KL106)
  (KL106) edge node {$a,b$} (KL107)
  (KL107) edge [loop right] node {$a,b$} (KL107)
  (KL103) edge [loop below] node {$b$} (KL103)
  (KL104) edge [loop below] node {$b$} (KL104)
  (KL105) edge [loop below] node {$b$} (KL105)
  (KL36) edge node {$b$} (KL47)
  (KL56) edge node {$b$} (KL67)
  (KL76) edge node {$b$} (KL87)
  
  ;
  
\end{tikzpicture}
\caption{The automata $A_{10}$, $B_7$ and the reachable part of their cross product $A_{10}\times B_7$.}
\end{figure}
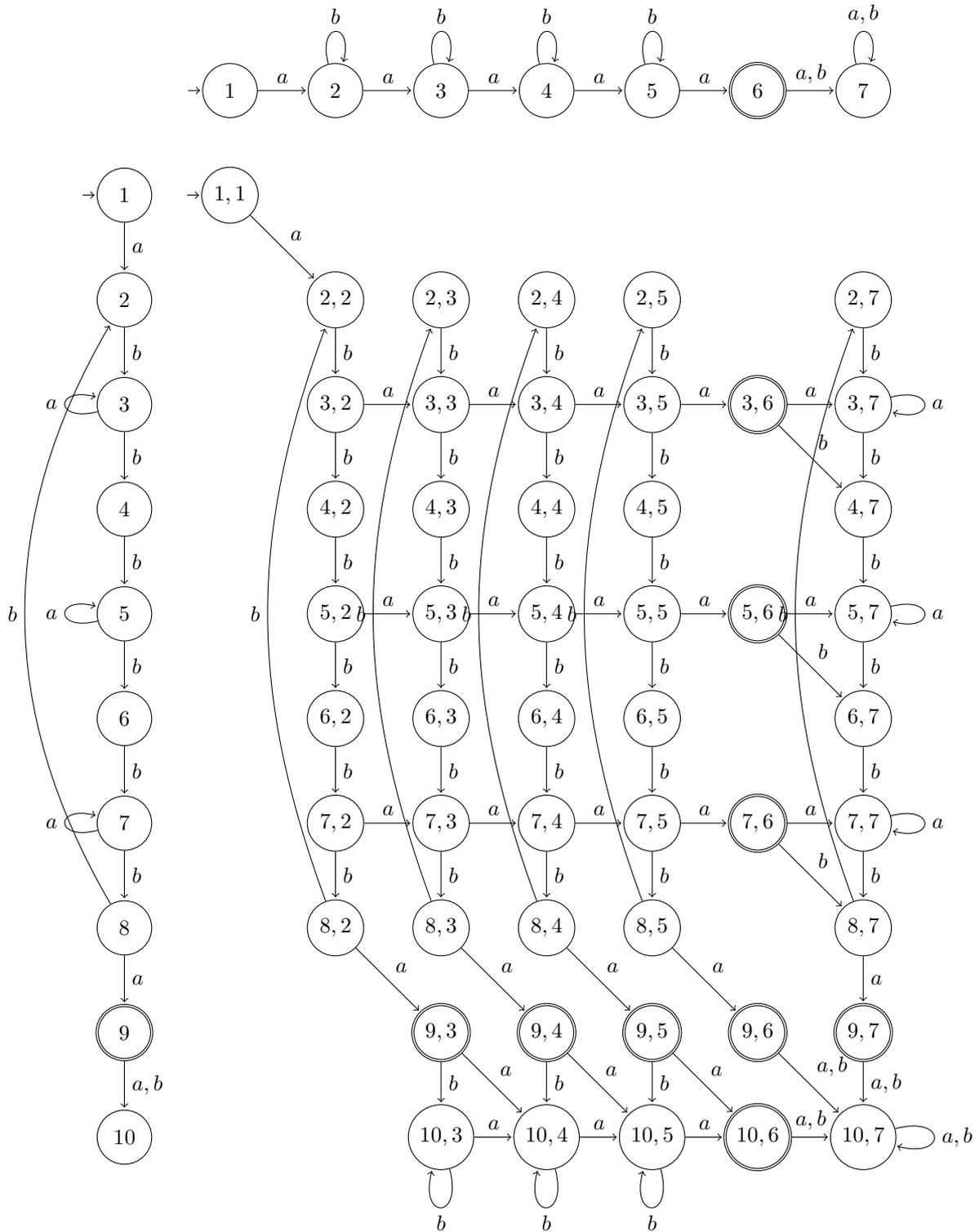

\begin{theorem}
The state complexity of $K_n\cup L_m$ is $mn-(m+n)-2-\lfloor\frac{n-2}{2}\rfloor$.
\end{theorem}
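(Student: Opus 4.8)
The plan is to analyze the product automaton $C=A_n\times B_m$, whose state set is $[n]\times[m]$, whose start state is $(1,1)$, which reads a letter coordinatewise, and in which $(p,q)$ is accepting iff $p=n-1$ or $q=m-1$; then $L(C)=K_n\cup L_m$, so the state complexity of $K_n\cup L_m$ equals the number of Myhill--Nerode classes among the \emph{reachable} states of $C$. Hence the proof splits into three parts: (a) an exact count of the reachable states of $C$, (b) a determination of which reachable states are indistinguishable, and (c) the arithmetic.

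For (a): since the start state has indegree $0$ in each of $A_n$ and $B_m$ (condition i) of Lemma~\ref{lem-factor-free}), the only reachable state of $C$ with a $1$ in either coordinate is $(1,1)$, and $(1,1)b=(n,m)$ is a trap of $C$; every other reachable state is reached from $(2,2)$. I would describe the reachable part row by row, indexing by the $A_n$-coordinate $p$. The relevant local moves are that $b$ permutes $p$ along the cycle $2\to3\to\dots\to n-2\to 2$ while fixing $q$ (provided $q\le m-2$), and that $a$ at an \emph{odd} interior state $3\le p\le n-3$ of $A_n$ fixes $p$ and advances $q$. From this one shows: for each odd $3\le p\le n-3$, the state $(p,q)$ is reachable for all $2\le q\le m$; for every other $p\ge 2$, the column $q=m-1$ is reachable together with $p$ only when $p\in\{n-1,n\}$, because $m-1$ is an accepting ``pass-through'' state of $B_m$ entered only on $a$, while $A_n$ reaches an even state, or $n-2$, only by moves that also move $B_m$ off $m-1$; and the column $q=2$ is reachable only together with $p\in\{2,\dots,n-2\}$, since $B_m$ sits at state $2$ exactly after the words in $ab^*$, during which $A_n$ is confined to the cycle. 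Summing, the number of reachable states comes out as $(n-1)(m-2)+1$, plus one extra state for each odd interior row of $A_n$ (which contributes $m-1$ rather than $m-2$ columns); since $A_n$ has $\lfloor\frac{n-4}{2}\rfloor$ odd states in $\{3,\dots,n-3\}$, this is where the floor term originates.

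For (b): the residual language of a state $(p,q)$ of $C$ is $U_p\cup V_q$, where $U_p$ and $V_q$ are the residual languages of $p$ in $A_n$ and of $q$ in $B_m$; so two reachable states are equivalent iff the corresponding unions coincide. I would use that $A_n$ and $B_m$ are reduced — concretely the witnesses $b^{\,n-2-p}a$ (for $p\le n-2$) in $A_n$ and $a^{\,m-1-q}$ (for $q\le m-1$) in $B_m$ — together with the fact that each $V_q$ with $q\le m-2$ consists of words with exactly $m-1-q$ occurrences of $a$, while each $U_p$ with $p\le n-2$ contains words with arbitrarily many $a$'s (because of the self-loops on the odd states). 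These observations show that distinct reachable states have distinct residuals, with the single exception that $(n-1,m-1)$, $(n-1,m)$, and $(n,m-1)$ all have residual $\{\varepsilon\}$ and are therefore identified. Thus exactly two reachable states disappear under minimization, and subtracting $2$ from the count in (a) and simplifying yields $mn-(m+n)-2-\lfloor\frac{n-2}{2}\rfloor$.

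I expect step (a) to be the main obstacle: pinning down the columns $q=2$ and $q=m-1$ exactly, and the parity of $n$, which controls whether the interior states of $A_n$ split evenly into odd ``self-loop'' rows and even ``dead-on-$a$'' rows. A secondary difficulty in step (b) is ruling out spurious merges: the witness $a^{\,m-1-q}$ can be ``accidentally'' accepted through the $A_n$-component only from the state $n-2$ (where $a$ itself is accepted), so the cases involving $p=n-2$ or the special column $q=m-2$ must be closed with an alternative witness, such as $ba^{\,m-1-q}$ or a word of the form $b^{\,j}a\cdot b^{\,n-2-r}a$ that first uses one of the $a$-self-loops of $A_n$.
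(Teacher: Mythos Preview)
Your proposal is correct and follows essentially the same route as the paper: analyze the reachable part of $A_n\times B_m$, observe that exactly the three accepting states $(n-1,m-1)$, $(n-1,m)$, $(n,m-1)$ collapse (each having residual $\{\varepsilon\}$), and distinguish all remaining pairs using the same witnesses $b^{\,n-2-p}a$ and $a^{\,m-1-q}$, with special handling of the row $p=n-2$ and the column $q=m-2$. Your packaging of the distinguishability argument via the $a$-count of the $B_m$-residuals $V_q$ versus the unbounded $a$-content of the $A_n$-residuals $U_p$ is a slightly more conceptual wrapper around what the paper does by an explicit case analysis, but the underlying verifications are the same.
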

\begin{proof}
Let us consider the reachable states of the cross product automaton $A_n\times B_m$. (See Figure~\ref{fig-cross} for an example.)
It is clear that $(1,1)$ is mapped to $(2,2)$ by $a$ and to $(n,m)$ by $b$.
Also, $(n,m)$ is a trap in $A_n\times B_m$.
Since $1w\neq 1$ for any nonempty $w$ neither in $A_n$ nor in $B_m$, we get that $(1,1)$ is the only reachable state of the form
$(1,j)$ or $(i,1)$ (i.e. the only reachable element of row $1$ and column $1$).
Since $q\cdot_{B_m}b=q$ for $2\leq q\leq m-2$,
and $b$ induces a cycle on $\{2,\ldots,n-2\}$ in $A_n$, we get that any state of the form $(p,q)$ with $2\leq p\leq n-2$ and
$2\leq q\leq m-2$ is reachable if so is $(p',q)$ for all $2\leq p'\leq n-2$.

Thus, reachability of $(2,2)$ implies reachability of $(p,2)$ for each $2\leq p\leq n-2$.

Also, by assumption $n\geq 5$, thus $3\leq n-2$ is an odd number, thus by definition of $A_n$, $3a=3$.
Moreover, $(3,2)$ is reachable. Thus so is $(3,q)$ for each $2\leq q\leq m-2$ (by $q\cdot_{B_m}a=q+1$ for $2\leq q<m-2$).

Hence, so is $(p,q)$ for each $2\leq p\leq n-2$ and $2\leq q\leq m-2$ by the cycle $\{2,\ldots,n-2\}$ induced by $b$ in $A_n$.

Also, $(p,m-2)a=(p,m-1)$ for each odd $3\leq p<n-2$, thus such states $(p,m-1)$ are also reachable.
In particular, $(3,m-1)$ is reachable, hence by $3\cdot_{A_n}a=3$ and $(m-1)a=m$, so is $(3,m)$.
Again by the cycle $\{2,\ldots,n-2\}$ induced by $b$ in $A_n$ we get that every state of the form $(p,m)$ with $2\leq p\leq n-2$
is also reachable.

Now for the last two rows: by $(n-2,p)a=(n-1,p+1)$ for $2\leq p\leq m-2$ we get that $(n-1,p)$ is reachable for $3\leq p\leq m-1$.
Also, $(n-2,m)a=(n-1,m)$ is reachable. By $(n-1,p)b=(n,p)$ for $3\leq p\leq m-2$ we get each $(n,p)$ with $3\leq p\leq m-2$ is
reachable, and so are $(n,m-2)a=(n,m-1)$ and the trap state $(n,m-1)a=(n,m)$ as well.

Overall, out of the $nm$ states the following ones are not reachable:
\begin{itemize}
\item members of the first row or the first column, but $(1,1)$ -- that's $m+n-2$ states;
\item $(n-1,2)$ and $(n,2)$ -- that's two another states;
\item $(p,m-1)$ for even $2\leq p\leq n-2$ -- that's $\lfloor\frac{n-2}{2}\rfloor$ states.
\end{itemize}
So far we have $nm-(n+m)-\lfloor\frac{n-2}{2}\rfloor$ reachable states, some of which might be indistinguishable.
Note that for $K_n\cup L_m$, the accepting states are those of the form $(n-1,q)$ with $3\leq q\leq m$ and $(p,m-1)$ for
odd $3\leq p\leq n-2$ and $p=n-1,n$. 

Amongst the reachable states, $(n,m-1)$, $(n-1,m-1)$ and $(n-1,m)$ are equivalent (accepting only the empty word),
with their merging making an additional reduction of $2$ in the state complexity.

We claim that all the other states are pairwise distinguishable. It is clear that from any state different from $(n,m)$,
either $(n-1,q)$ or $(p,m-1)$ is reachable for some $p\in[n]$ or $q\in[m]$, thus $(n,m)$ is the only empty state.

First we show that final states are pairwise distinguishable (apart from the three one already marked for merging),
by a case analysis. Let $(p,q)\neq(p',q')$ be final states, at least one of them being not a member of $\{(n-1,m),(n-1,m-1),(n,m-1)\}$.
\begin{itemize}
\item {\bf If $q=q'=m-1$}, then without loss of generality we can assume $p<p'$. Then $p<n-1$ (otherwise $(p,q)=(n-1,m-1)$ and $(p',q')=(n,m-1)$
  which are already merged). Hence $ab^{n-2-p}a$ is accepted from $(p,q)$ but not from $(p',q')$.
\item {\bf If $p=p'=n-1$}, then we may assume $q<q'$. Then, $a^{n-1-q}$ is accepted from $(p,q)$ but not from $(p',q')$.
 (Note that $q'a^{n-1-q}\neq n-1$ since $n-1-q>1$ and $K_n$ does not contain any word ending with at least two $a$'s.)
\item {\bf If $p=n-1$, $p'\neq n-1$, $q'=m-1$ and $q<m-1$}, then $(p,q)$ accepts $a^{m-1-q}$. From state $(p',q')$,
  the word leads to a final state only if $p'=n-2$ and $q=m-2$. These states $(n-1,m-2)$ and $(n-2,m-1)$ are distinguishable by
  $ba$, mapping $(n-1,m-2)$ to the final state $(n,m-1)$ and $(n-2,m-1)$ to the empty state.
\item {\bf If $p=n-1$, $p'\neq n-1$, $q'=m-1$ and $q=m$}, then these states $(n-1,m)$ and $(p',m-1)$ with  $p'\neq n$ are distinguishable
  since $(n-1,m)$ accepts only $\varepsilon$, while $(p',m-1)$ accepts $b^{n-2-p'}a$.
\end{itemize}

Next we show that non-final states are also pairwise distinguishable.

Observe that $(1,1)$ is the only non-final nonempty state $(p,q)$ for which $L_{(p,q)}\subseteq a\{a,b\}^*$  (i.e. all the other states
accept some word beginning with $b$). Thus, $(1,1)$ is distinguishable from any other state.
We show that if $(p,q)\neq (p',q')$ are non-final states, then they are distinguishable,
by another lengthy case analysis. Assume $p\leq p'$ and if $p=p'$, then $q<q'$ (that is, $(p,q)$ lexicographically precedes $(p',q')$).
\begin{itemize}
\item {\bf If $p<p'<n-1$ and $\{m-2\}\neq\{q,q'\}$}, then $b^{n-2-p}a$ is accepted only from $p$ and $b^{n-2-p'}a$ is accepted only from $p'$ in $A_n$,
  and by $\{m-2\}\neq\{q,q'\}$, either $q$ or $q'$ does not accept any word of the form $b^*a$, thus 
  $(p,q)$ and $(p',q')$ are distinguishable.
\item {\bf If $p<p'<n-1$ and $q=q'=m-2$, and $p+n-p'$ is even}, then $b^{n-2-p'}a$ takes $(p',q')$ to $(n-1,m-1)$ and $(p,q)$ to
  the state $(p+n-2-p',m-1)$ which we already know to be distinguishable from $(n-1,m-1)$, hence so are $(p,q)$ and $(p',q')$.
\item {\bf If $p<p'<n-1$ and $q=q'=m-2$, and $p+n-p'$ is odd}, then $b^{n-p}a$ takes $(p,q)$ to $(3,m-1)$ and $(p',q')$ to $(n,m-1)$
  which are distinguishable, thus so are $(p,q)$ and $(p',q')$.
\item {\bf If $p<n-1$ and $p'=n$}, then $(p',q')$ accepts all words of the form $b^*a^{m-1-q'}$, in $(n,m-1)$.
  By $pb^{n-2-p}a=3<n-1$ we get that $(p,q)$ cannot accept $b^{n-2-p}a$ in a state merged with $(n,m-1)$ and since we already
  know that final states are distinguishable, so are $(p,q)$ and $(p',q')$.
\item {\bf If $p=p'$ and $q<q'$}, then $a^{m-1-q}$ is accepted from $(p,q)$ but not from $(p',q')$
  unless $p'=n-2$ and $q=m-2$. But also in this case, $(n-2,m-2)$ and $(n-2,m)$ are distinguishable since so are $(n-2,m-2)b=(3,m-2)$ and
  $(n-2,m)b=(3,m)$.
\end{itemize}
Thus, after merging the states $(n-1,m)$, $(n-1,m-1)$ and $(n,m-1)$ we get a reduced automaton having $mn-(m+n)-2-\lfloor\frac{n-2}{2}\rfloor$
states, showing the claim (since $n\leq m$ can be assumed by symmetry).
\end{proof}
We note that for symmetric difference the same construction works, with a minor change: in that case, $(n-1,m-1)$ is not a final state and is
merged by the empty state.
\section*{References}
{
\bibliographystyle{plain}
\bibliography{biblio-factorfree}
}
\end{document}